\documentclass[letterpaper,11pt]{article}

\usepackage[margin=1in]{geometry}
\usepackage{xspace,exscale,relsize}
\usepackage{fancybox,shadow}
\usepackage{graphicx}
\usepackage{color}
\usepackage{amsthm,amsfonts}
\usepackage{amssymb}
\usepackage{authblk}
\usepackage[title]{appendix}
\usepackage{extarrows}
\usepackage[noadjust]{cite}
\usepackage{amsmath}
\makeatletter
\let\over=\@@over \let\overwithdelims=\@@overwithdelims
\let\atop=\@@atop \let\atopwithdelims=\@@atopwithdelims
\let\above=\@@above \let\abovewithdelims=\@@abovewithdelims
\makeatother
\usepackage{rotating}
\usepackage{ifpdf}
\usepackage{subfigure}
\usepackage{prettyref}
\usepackage{enumitem}
\usepackage{tikz}
\usepackage{pgfplots}
\usepgfplotslibrary{groupplots}
\pgfplotsset{compat=1.18}
\usepackage{float}
\pgfmathdeclarefunction{binentropy}{1}{%
  \pgfmathparse{%
    -(#1)*ln(max(#1,1e-10))/ln(2)
    -(1-(#1))*ln(max(1-(#1),1e-10))/ln(2)%
  }%
}
\usetikzlibrary{arrows}
\tikzstyle{int}=[draw, fill=blue!20, minimum size=2em]
\tikzstyle{dot}=[circle, draw, fill=blue!20, minimum size=2em]
\tikzstyle{init} = [pin edge={to-,thin,black}]
\usepackage[
  CJKbookmarks=true,
  bookmarksnumbered=true,
  bookmarksopen=true,
  colorlinks=true,
  citecolor=red,
  linkcolor=blue,
  anchorcolor=red,
  urlcolor=blue,
  pdftitle={Entropy of Bernoulli Measures Conditioned on Affine Subspaces and a Problem of Ancheta-Massey},
  pdfauthor={}
]{hyperref}
\usepackage[all]{xy}
\usepackage{mathtools}
\usepackage{ifthen}
\newboolean{aos}
\setboolean{aos}{FALSE}

\newcommand{\Law}{\text{Law}}

\newcommand{\bbF}{\ensuremath{\mathbb{F}}}

\ifx\eqref\undefined
  \newcommand{\eqref}[1]{~(\ref{#1})}
\fi
\ifx\mod\undefined
  \def\mod{\mathop{\rm mod}}
\fi

\usepackage{bm}

\def\argmax{\mathop{\rm argmax}}

\def\exp{\mathop{\rm exp}}

\def\Cov{\mathrm{Cov}}

\def\FF{\mathbb{F}}

\newcommand{\reals}{\mathbb{R}}

\newcommand{\Expect}{\mathbb{E}}

\newcommand{\Prob}{\mathbb{P}}
\newcommand{\prob}[1]{\mathbb{P}\left[#1\right]}

\newcommand{\pth}[1]{\left( #1 \right)}
\newcommand{\qth}[1]{\left[ #1 \right]}

\newcommand{\Bern}{\mathrm{Bern}}

\newcommand{\Indc}{\mathbf{1}}

\definecolor{myblue}{rgb}{.8, .8, 1}
\definecolor{mathblue}{rgb}{0.2472, 0.24, 0.6}
\definecolor{mathred}{rgb}{0.6, 0.24, 0.442893}
\definecolor{mathyellow}{rgb}{0.6, 0.547014, 0.24}

\newcommand{\calC}{{\mathcal{C}}}

\newrefformat{eq}{(\ref{#1})}
\newrefformat{thm}{Theorem~\ref{#1}}
\newrefformat{th}{Theorem~\ref{#1}}
\newrefformat{chap}{Chapter~\ref{#1}}
\newrefformat{sec}{Section~\ref{#1}}
\newrefformat{seca}{Section~\ref{#1}}
\newrefformat{algo}{Algorithm~\ref{#1}}
\newrefformat{fig}{Fig.~\ref{#1}}
\newrefformat{tab}{Table~\ref{#1}}
\newrefformat{rmk}{Remark~\ref{#1}}
\newrefformat{clm}{Claim~\ref{#1}}
\newrefformat{def}{Definition~\ref{#1}}
\newrefformat{cor}{Corollary~\ref{#1}}
\newrefformat{lmm}{Lemma~\ref{#1}}
\newrefformat{prop}{Proposition~\ref{#1}}
\newrefformat{pr}{Proposition~\ref{#1}}
\newrefformat{app}{Appendix~\ref{#1}}
\newrefformat{apx}{Appendix~\ref{#1}}
\newrefformat{ex}{Example~\ref{#1}}
\newrefformat{exer}{Exercise~\ref{#1}}
\newrefformat{soln}{Solution~\ref{#1}}

\def\unifto{\mathop{{\mskip 3mu plus 2mu minus 1mu%
  \setbox0=\hbox{$\mathchar"3221$}%
  \raise.6ex\copy0\kern-\wd0%
  \lower0.5ex\hbox{$\mathchar"3221$}}\mskip 3mu plus 2mu minus 1mu}}

\ifx\lesssim\undefined
\def\simleq{{{\mskip 3mu plus 2mu minus 1mu%
  \setbox0=\hbox{$\mathchar"013C$}%
  \raise.2ex\copy0\kern-\wd0%
  \lower0.9ex\hbox{$\mathchar"0218$}}\mskip 3mu plus 2mu minus 1mu}}
\else
\def\simleq{\lesssim}
\fi

\ifx\gtrsim\undefined
\def\simgeq{{{\mskip 3mu plus 2mu minus 1mu%
  \setbox0=\hbox{$\mathchar"013E$}%
  \raise.2ex\copy0\kern-\wd0%
  \lower0.9ex\hbox{$\mathchar"0218$}}\mskip 3mu plus 2mu minus 1mu}}
\else
\def\simgeq{\gtrsim}
\fi

\newtheorem{theorem}{Theorem}
\newtheorem{lemma}{Lemma}
\newtheorem{corollary}{Corollary}

\theoremstyle{definition}

\renewcommand{\hat}{\widehat}

\title{
Entropy of Bernoulli Measures Conditioned on Affine Subspaces and a Problem of Ancheta--Massey
}

\author{Yihong Wu\thanks{Department of Statistics and Data Science, Yale University, New Haven, Connecticut, USA, \texttt{yihong.wu@yale.edu}.}}
	
\date{\today}

\begin{document}
\ifpdf
\DeclareGraphicsExtensions{.pgf}
\graphicspath{{figures/}{plots/}}
\fi
\maketitle

\begin{abstract}
A textbook result in information theory is that linear encoders achieve the entropy for lossless compression of Bernoulli source with parameter $p$. For lossy compression, however, linearity is known to incur strict suboptimality compared to the rate-distortion function. Massey asked whether the optimal rate for linear encoding is achieved simply by compressing a fraction of the bits linearly and losslessly and estimating the rest by zero \cite{Massey1978}. For $p=\frac12$, Ancheta answered this question affirmatively \cite{Ancheta1978}. This note extends Ancheta's result to all $p<\frac12$. 

The key argument is to bound the entropy of the posterior distribution conditioned on an affine subspace in terms of its marginals. The proof was discovered by GPT-5.6 Sol in an interactive process guided by the author. The purpose of the present note is to communicate a simplified version of this proof and to make connections with the existing literature on coding theory and spin glass theory.
\end{abstract}

\section{Introduction}\label{sec:introduction}

A textbook result in information theory is that entropy, the fundamental limit of data compression \cite{shannon}, can be achieved using a linear encoder (see, e.g., \cite[Problem 1.7]{ckbook2}). 
Specifically, let $X \in \bbF_2^n$ have iid coordinates drawn from $\Bern(p)$. There exists a matrix $H \in \bbF_2^{k\times n}$ and a decoder $g: \bbF_2^k\to\bbF_2^n$, such that $\prob{g(HX)\neq X} \to 0$ as $n\to\infty$ and $k = n h(p)+o(n)$. Here 
\[
h(p) \triangleq  p \log_2 \frac{1}{p} + (1-p) \log_2 \frac{1}{1-p}
\]
is the Shannon entropy of $\Bern(p)$ in bits.
Without loss of generality, we assume $p\leq \frac{1}{2}$ throughout the paper.

For lossy data compression, Shannon's rate-distortion theory determines the optimal tradeoff between the compression rate and the reconstruction error. 
For the Bernoulli source $X$, suppose that on average a fraction $D$ of the bits are allowed to be erroneously decoded. (The distortion parameter $D\leq p$ is known as the bit error rate.)
There exist $f:\bbF_2^n\to\bbF_2^k$ and $g:\bbF_2^k\to\bbF_2^n$, such that 
the expected Hamming distance between the source and the reconstruction satisfies
$\Expect[d_H(X,g(f(X))] \leq nD + o(n)$,
and
$k = n R(D) + o(n)$,
where
\begin{equation}
R(D) \triangleq h(p) - h(D)
\label{eq:RD}
\end{equation}
is the \emph{rate-distortion function}, which is the minimal compression rate for any compression scheme achieving bit error rate $D$.

Unlike lossless data compression, it is known that linearity\footnote{Here linear encoding is not to be confused with the scheme of using a linear subspace $C$ as codebook and (nonlinearly) encoding $X$ by its closest point $\hat X$ in $C$. While this scheme is known to achieve the rate-distortion function $1-h(D)$ for $p=\frac{1}{2}$ \cite{goblick1963coding} (see also \cite[Theorem 6.2.2]{berger} and \cite{DelsartePiret1986}), it is strictly suboptimal for $p<\frac{1}{2}$. To see this, suppose $C=\ker(A)$ for some $A\in\bbF_2^{(n-k)\times n}$ with full rank, which has a submatrix $A_I$ formed by $n-k$ columns indexed by $I$. Since $AX=A(X-\hat X)$, $nh(D) \geq H(X-\hat X) \geq H(AX) \geq H(A_I X_I)=H(X_I)=(n-k) h(p)$, the last inequality applying the concavity of entropy. Thus the compression rate satisfies $k/n\geq 1-\frac{h(D)}{h(p)}=\frac{R(D)}{h(p)}$. For modifications of this scheme that achieve the rate-distortion function, see \cite{ChenHeJagmohan2007,GuptaVerdu2009}.
%, which is $\frac{1}{h(p)}$ times the rate-distortion function.
} incurs a strict suboptimality for lossy compression. 
A result due to Ancheta \cite{Ancheta1978,Massey1978} is that if the compressor is restricted to a linear map, the bit error rate must be at least
\begin{equation}\label{eq:ancheta-converse}
  D
  \geq
	\begin{cases}
		(1-R)h^{-1}\left(\frac{h(p)-R}{1-R}\right),
  & p<\frac{1}{2} \\
	 \frac{1-R}{2} & p=\frac{1}{2}.
	\end{cases}  
\end{equation}
where $R=k/n \leq h(p)$ is the compression rate and $h^{-1}$ takes values in $[0,1/2]$.
This shows that linear encoding fails achieve the rate-distortion function. (See \prettyref{fig:rate-distortion-curves}.)

For the special case $p=\frac{1}{2}$, Ancheta's lower bound $D=\frac{1-R}{2}$ may be achieved by the (trivial) linear encoder that keeps the first $R n$ bits and randomly guesses the remaining $(1-R)n$ bits. 
For general $p$, this may be extended to a \emph{time-sharing} scheme that linearly compresses a $\frac{R}{h(p)}$ fraction of the bits losslessly and decodes the remaining bits by 0, achieving an overall distortion
%mixes lossless compression and totally lossy compression.
\begin{equation}
D=p\pth{1-\frac{R}{h(p)}}.
\label{eq:time-share}
\end{equation}
In \cite{Massey1978,Massey2009}, Massey posed the question whether this time-sharing bound is the best one can do with linear encoders.
This question has since remained open and led to related conjectures in the literature \cite{MazumdarPal2017,MazumdarPal2021} (see also \cite[Chap.~9]{Huang2015} for a discussion of linear encoding in lossy compression.)
In this paper we resolve this question in the positive by showing that \prettyref{eq:time-share} is tight for linear encoding for all $p<\frac{1}{2}$.

\begin{figure}[ht]
\centering
\pgfmathsetmacro{\hbiased}{binentropy(0.11)}
\subfigure[$p=1/2$]{%
\begin{tikzpicture}
\begin{axis}[
  width=0.44\textwidth,
  height=0.32\textwidth,
  xmin=0,
  ymin=0,
  xlabel={$D$},
  ylabel={rate},
  tick align=outside,
  xmax=0.5,
  ymax=1.03,
  xtick={0,0.1,0.2,0.3,0.4,0.5},
  ytick={0,0.2,0.4,0.6,0.8,1},
  legend style={
    draw=none,
    fill=none,
    font=\scriptsize,
    at={(0.98,0.98)},
    anchor=north east
  }
]
\addplot[
  blue,
  thick,
  domain=0:0.5,
  samples=151
] {max(0,1-binentropy(x))};
\addlegendentry{$R(D)$}

\addplot[
  black,
  thick,
  domain=0:0.5
] {1-2*x};
\addlegendentry{$R_{\mathrm A}(D)$}

\addplot[
  red,
  very thick,
  densely dashed,
  domain=0:0.5
] {1-2*x};
\addlegendentry{$R_{\mathrm L}(D)$}
\end{axis}
\end{tikzpicture}%
\label{fig:rate-distortion-unbiased}%
}
\hfill
\subfigure[$p=0.11$]{%
\begin{tikzpicture}
\begin{axis}[
  width=0.44\textwidth,
  height=0.32\textwidth,
  xmin=0,
  ymin=0,
  xlabel={$D$},
  tick align=outside,
  legend style={
    draw=none,
    fill=none,
    font=\scriptsize,
    at={(0.98,0.98)},
    anchor=north east
  },
  xmax=0.11,
  ymax=0.52,
  xtick={0,0.02,0.04,0.06,0.08,0.11},
  xticklabels={$0$,$.02$,$.04$,$.06$,$.08$,$.11$},
  xticklabel style={font=\scriptsize},
  ytick={0,0.1,0.2,0.3,0.4,0.5}
]

\addplot[
  blue,
  thick,
  domain=0:0.11,
  samples=151
] {max(0,\hbiased-binentropy(x))};
\addlegendentry{$R(D)$}

\addplot[
  black,
  thick,
  densely dotted,
  domain=0:0.11,
  samples=151,
  variable=\t
]
(
  {(1-\hbiased)*\t/(1-binentropy(\t))},
  {(\hbiased-binentropy(\t))/(1-binentropy(\t))}
);
\addlegendentry{$R_{\mathrm A}(D)$}

\addplot[
  red,
  very thick,
  densely dashed,
  domain=0:0.11
] {\hbiased*(1-x/0.11)};
\addlegendentry{$R_{\mathrm L}(D)$}
\end{axis}
\end{tikzpicture}
\label{fig:rate-distortion-biased}%
}
\caption{Rate-distortion tradeoffs for a $\Bern(p)$ source with Hamming loss: $R(D)$ is Shannon's rate-distortion function \prettyref{eq:RD}, $R_{\mathrm A}(D)$ is Ancheta's converse \prettyref{eq:ancheta-converse}, and $R_{\mathrm L}(D)$ corresponds to the time-sharing bound \prettyref{eq:time-share}.}
\label{fig:rate-distortion-curves}
\end{figure}

\section{Problem formulation and main results}\label{sec:problem}

Let
\[
  X=(X_1,\ldots,X_n)^\mathsf T,
  \qquad X_i\stackrel{\mathrm{iid}}{\sim}\Bern(p),
  \qquad 0<p\leq\frac12.
\]
A linear compressor is a matrix $H\in\FF_2^{k\times n}$ that 
maps $X$ to $HX$ from $n$ bits to $k$ bits. (Here and below, vector-matrix multiplication is in $\bbF_2$.)
Because redundant linear projections provide no extra information, we may assume without loss of generality that $H$ has full row rank.

Given an encoder, the optimal decoder that minimizes the Hamming loss is the so-called \emph{bitwise maximum a posteriori} (MAP) decoder that reports the mode of each posterior marginal. 
That is, upon observing $HX=s$, the $i$th decoded bit is
\[
\hat X_i(s) \in \argmax_{b \in \{0,1\}} \Prob(X_i=b|HX=s)
\]
resulting in the minimal Hamming loss
\begin{equation}\label{eq:affine-error}
\Expect[d_H(\hat X,X)|HX=s]	= \sum_{i=1}^n\min\{\Prob(X_i=0|HX=s), \Prob(X_i=1|HX=s)\}.
\end{equation}
The optimal bit error rate for a given linear encoder $H$ is thus
\begin{equation}\label{eq:distortion}
  D(H)\triangleq \frac1n\Expect\qth{
	\sum_{i=1}^n\min\{\Prob(X_i=0|HX), \Prob(X_i=1|HX)\}}.
\end{equation}

The following theorem gives a nonasymptotic converse for linear encoding, proving the tightness of the time-sharing bound \prettyref{eq:time-share}.
\begin{theorem}
\label{thm:finite-converse}
For any $H\in\FF_2^{k\times n}$ with full row-rank,
 %and the optimal reconstruction from $S=HX$,
\begin{equation}\label{eq:finite-converse}
  \frac{k}{n}
  \geq
  h(p)\left(1-\frac{D(H)}{p}\right).
\end{equation}
\end{theorem}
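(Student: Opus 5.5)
The plan is to reduce \prettyref{thm:finite-converse} to a single-fiber inequality: the entropy of the posterior on one affine subspace is bounded by the posterior marginals. Write $\mu_s$ for the law of $X$ given $HX=s$. This is $\Bern(p)^{\otimes n}$ restricted to the affine subspace $A_s=\{x: Hx=s\}$, so $\mu_s(x)\propto w^{|x|}\Indc\{x\in A_s\}$ with $w=p/(1-p)\le 1$. Let $m_i(s)=\min\{\Prob(X_i=0|HX=s),\Prob(X_i=1|HX=s)\}$. The key lemma I would aim for is
\begin{equation*}
H(\mu_s)\;\le\;\frac{h(p)}{p}\sum_{i=1}^n m_i(s)\qquad\text{for every affine subspace } A_s .
\end{equation*}
Granting this, the theorem is immediate. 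Since $HX$ takes at most $2^k$ values, $H(X|HX)=H(X)-H(HX)\ge nh(p)-k$. Averaging the lemma over $s$ and using \prettyref{eq:distortion} gives $nh(p)-k\le \frac{h(p)}{p}\,nD(H)$, which rearranges to \prettyref{eq:finite-converse}.

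For the lemma, the first tool is subadditivity: $H(\mu_s)\le\sum_i h(q_i)$, where $q_i$ is the posterior marginal of coordinate $i$. Because $h(q)/q$ is decreasing and $h$ is symmetric, $h(q_i)\le \frac{h(p)}{p}m_i$ whenever $m_i\ge p$. So subadditivity already handles every ``well-mixed'' coordinate. It fails for coordinates with $0<m_i<p$, where $h(m_i)>\frac{h(p)}{p}m_i$. That gap must be paid for using the specific Gibbs structure of $\mu_s$, not merely the fact that it is some distribution on an affine space.

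To handle this, I would use the exact identity
\[
H(\mu_s)=\Expect_{\mu_s}|X|\log\tfrac1p+\pth{n-\Expect_{\mu_s}|X|}\log\tfrac1{1-p}-\log\tfrac{1}{\Prob(X\in A_s)},
\]
that is, $H(\mu_s)=\Expect_{\mu_s}[-\log \Prob_{\Bern(p)^{\otimes n}}(X)]-D(\mu_s\|\Bern(p)^{\otimes n})$. I would then proceed by induction on $n$, or on $\dim A_s$, pinning one coordinate at a time. Pick a coordinate $i$ and apply the chain rule $H(\mu_s)=h(q_i)+q_iH(\mu_s^{(1)})+(1-q_i)H(\mu_s^{(0)})$, where $\mu_s^{(b)}$ is $\mu_s$ conditioned on $X_i=b$. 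Each $\mu_s^{(b)}$ is again a tilted measure on an affine subspace of $\bbF_2^{n-1}$, so the induction hypothesis applies to it. The task then reduces to showing
\[
h(q_i)+\frac{h(p)}{p}\sum_{j\ne i}\bigl(q_i m_j^{(1)}+(1-q_i)m_j^{(0)}\bigr)\le \frac{h(p)}{p}\sum_j m_j ,
\]
which requires the conditioning to \emph{decrease} the total minimal marginal mass by at least $\frac{p}{h(p)}h(q_i)-m_i$.

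The affine structure should supply this. If $X_i$ is not a.s.\ determined, then flipping $X_i$ is paired with flipping a fixed set of other coordinates (a coset translate), so conditioning on $X_i$ determines correlated coordinates and collapses their marginals. The weight ratio $w^{\pm(\text{flip size})}$ should then control how small $m_i$ can be relative to the marginals of those partner coordinates. A small $m_i$ forces the partner coordinates to carry substantial minority mass, which is what gets destroyed by conditioning.

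The main obstacle I expect is this last step: choosing the right coordinate to pin, or the right potential, so that the per-step inequality holds. If the naive induction fails, my fallback is to pin an entire minimal dependency (a minimal-weight coset direction $v$ with $A_s+v=A_s$) at once, and compare the two-point law on $\{x,x+v\}$ with $\Bern(p)$-type ratios.
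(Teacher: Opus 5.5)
Your reduction is correct and is exactly the paper's. The single-coset bound you aim for is \prettyref{lmm:affine-entropy}, and averaging it against $H(X\mid HX)\ge nh(p)-k$ is the paper's proof of the theorem verbatim.

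The gap is that this lemma carries the entire difficulty, and the induction you propose for it cannot close. Take $C=\{000,110,101,011\}$ and any $0<p<\frac12$. By symmetry every coordinate has $q_i(C)=q\triangleq\frac{2p^2}{(1-p)^2+3p^2}$, and $q<p$ (equivalent to $(1-2p)^2>0$). Now pin $X_1$:
on the branch $X_1=0$, the remaining pair lives on $\{00,11\}$ with conditional probability $\frac{p^2}{(1-p)^2+p^2}$ on $11$;
on the branch $X_1=1$, it is uniform on $\{10,01\}$, so both remaining marginals are exactly $\frac12$.
Hence for $j=2,3$ you get $q\cdot\frac12+(1-q)\frac{p^2}{(1-p)^2+p^2}=q=m_j$. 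Conditioning on $X_1$ does not reduce the other coordinates' minority mass \emph{at all}, contrary to your heuristic about partner coordinates.

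Your per-step inequality therefore collapses to $h(q)\le\frac{h(p)}{p}q$. This is false, because $h(x)/x$ is strictly decreasing and $q<p$. By symmetry no choice of coordinate helps, even though the lemma itself does hold for this $C$. The loss comes from invoking the inductive hypothesis on the minority branch. That branch is a nonzero coset with uniform marginals, where the target bound exceeds the true entropy by a factor $h(p)/p\ge 2$. Your fallback is not specified enough to evaluate. In this example, pinning a weight-two dependency already determines $x$, so it just restates the original inequality.

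Two things are missing. First, a way to handle the entropy in one shot rather than by the chain rule. Second, the extremality of the zero coset, which is where the ferromagnetic structure of the posterior actually enters.

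The paper changes measure to the product law $P^b$ re-centered at the bitwise-MAP point $b$. This gives $H(P_A)=n\log_2\frac1{1-p}+e(A)\log_2\frac{1-p}{p}-D_{\rm KL}(P_A\Vert P^b)$ with $D_{\rm KL}(P_A\Vert P^b)\ge\log_2\frac1{P^b(A)}$. The lemma then reduces to the mass bound $P^b(A)\le(1-p)^{n-e(A)/p}$, which follows from three facts:
(a) $P^b(A)=P(a+b+C)\le P(C)$, because the syndrome law has Fourier coefficients that are products of factors $1-2r_j\ge0$, so its mode is at $0$;
(b) $q_i(C)\le p$ and $e(C)\le e(a+C)$, from that mode property and the four-point inequality $g(s)g(t)\le g(0)g(s+t)$ (proved by conditioning on $W+W'$), applied to $H_{-i}$;
(c) $P(C)\le(1-p)^{n-e(C)/p}$, proved by induction on $n$ for $J(C)=-\log_2P_C(0)$ rather than for entropy.

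That last induction follows only the branch $X_n=0$, giving the exact recursion $J(C)=J(C_0)+\log_2\frac1{1-q}$ with $q\le p$. The minority branch enters only through the comparison $e(C_1)\ge e(C_0)$, so it is never charged a loose bound as in your scheme.
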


\subsection{Conditional entropy on a coset}\label{sec:affine-entropy}

It will be more convenient to phrase the discussion in terms of affine subspaces or cosets.
Let $C$ be a linear subspace of $\FF_2^n$ and let $A=a+C$ be an affine subspace for $a\in \bbF_2^n$.
Then $A$ is a coset of the subgroup $C$ of $(\mathbb F_2^n,+)$.

Throughout, $P=\Law(X)$ denotes the product Bernoulli measure with parameter $p$ on
the Hamming cube, with the ambient dimension $n$ suppressed to ease the notation.
The probability mass function of $P$  is
\begin{equation}\label{eq:bernoulli-measure}
  P(x)
  =p^{|x|}(1-p)^{n-|x|},
  \qquad x\in\FF_2^n,
\end{equation}
where $|x|$ denotes the Hamming weight.

Given $X\in A$, denote the conditional distribution by
$P_A\triangleq \Law(X\mid X\in A)$, that is,
\begin{equation}\label{eq:posterior}
  P_A(x)=P(x\mid A) = \frac{P(x)}{P(A)},
  \qquad x\in A.
\end{equation}
Given $X \in A$, define the minimal Hamming loss by
\begin{equation}\label{eq:eA}
e(A) \triangleq \min_{\hat x \in \bbF_2^n} \Expect[d_H(X,\hat x) \mid X \in A] = 
\sum_{i=1}^n\min\{P_A(X_i=0),P_A(X_i=1)\},
%\sum_{i=1}^n\min\{q_i(A),1-q_i(A)\},
\end{equation}
attained by the bitwise MAP rule
\[
\hat x_i = \argmax_{b \in \{0,1\}} P_A(X_i=b).
\]
As such, $e(A)$ may be interpreted as the \emph{radius} of $A$ weighted by the Bernoulli measure.

The following key lemma bounds the entropy of the conditional distribution of a Bernoulli vector on an affine subspace in terms of its marginals.
\begin{lemma}
\label{lmm:affine-entropy}
For every affine subspace $A\subseteq\FF_2^n$,
\begin{equation}\label{eq:affine-entropy}
  H(P_A)
  \leq
  \frac{h(p)}{p}\,e(A).
\end{equation}
\end{lemma}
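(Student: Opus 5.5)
The plan is induction on $\dim C$, where $A=a+C$, peeling off one non-constant coordinate at a time with the chain rule. The base case $\dim C=0$ is trivial: $A$ is a single point, so $H(P_A)=0=e(A)$. For the inductive step I would choose a coordinate $i$ that is not constant on $A$. Splitting $A=A_0\cup A_1$ with $A_b=A\cap\{x_i=b\}$ gives two affine subspaces of dimension $\dim C-1$. Write $q_j=P_A(X_j=1)$ and $q_j^{(b)}=P_{A_b}(X_j=1)$, and let $\lambda_b=P_A(X_i=b)$, so that $\lambda_1=q_i$. The chain rule gives
\[
H(P_A)=h(q_i)+\lambda_0H(P_{A_0})+\lambda_1H(P_{A_1}).
\]
Since $q_j=\lambda_0q_j^{(0)}+\lambda_1q_j^{(1)}$ and $t\mapsto\min\{t,1-t\}$ is concave, each coordinate contributes a nonnegative Jensen gap. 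Coordinate $i$ is deterministic on each $A_b$, so its gap is exactly $\min\{q_i,1-q_i\}$. Hence
\[
e(A)=\lambda_0e(A_0)+\lambda_1e(A_1)+\Delta,\qquad
\Delta=\sum_{j}\Big[\min\{q_j,1-q_j\}-\textstyle\sum_b\lambda_b\min\{q_j^{(b)},1-q_j^{(b)}\}\Big]\ge\min\{q_i,1-q_i\}.
\]
With the induction hypothesis applied to $A_0,A_1$, it then suffices to find some non-constant $i$ with
\begin{equation*}
h(q_i)\le \frac{h(p)}{p}\,\Delta .
\end{equation*}

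The easy case is $\min\{q_i,1-q_i\}\le p$. Since $t\mapsto h(t)/t$ is decreasing on $(0,\tfrac12]$, this gives $h(q_i)\le \frac{h(p)}{p}\min\{q_i,1-q_i\}\le\frac{h(p)}{p}\Delta$, and the step closes. So the whole difficulty is the case where every non-constant coordinate has posterior marginal strictly between $p$ and $1-p$. For this case I would exploit the coset structure. Any such $i$ lies in the support of some parity check of the dual code. Conditioning on $x_i$ then changes the law of the other coordinates in that parity check, and these changes should produce a compensating Jensen gap $\Delta$ from coordinates $j\ne i$.

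The model computation is the simplest such coset, $x_1+x_2=1$. Here $q_1=q_2=\tfrac12$ and $h(q_1)=1$, while $\Delta=1$, because both coordinates become deterministic once $x_1$ is fixed. The required inequality becomes $1\le h(p)/p$, which holds since $h(p)\ge 2p$ for $p\le\tfrac12$. For a general coset I would pick $i$ to be the non-constant coordinate with marginal farthest from $\tfrac12$. I would then try to lower-bound $\Delta$ through an explicit Bernoulli-tilt computation: conditioned on the parity, the sign of a check flips the relative odds of its other coordinates by the factor $(1-p)/p$. I would combine this with the sharper inequality $h(t)\le 2\min\{t,1-t\}\,h(p)/(2p)$, which should suffice whenever $\Delta\ge 2\min\{q_i,1-q_i\}$, roughly.

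This last step, controlling the cross-coordinate Jensen gap when all marginals exceed $p$, is what I expect to be the main obstacle. If the coordinate-wise induction proves too lossy, my fallback is to change the unit of induction. Instead of one coordinate, I would condition on a whole information set $I$ of $C$, so that $H(P_A)=H(X_I)\le\sum_{i\in I}h(q_i)$. I would then charge each $h(q_i)$ to $\min\{q_j,1-q_j\}$ over the coordinates $j$ of its fundamental circuit. The information set would be chosen to minimize a suitable potential, for instance the maximum $P_A$-probability basis.
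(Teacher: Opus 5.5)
\textbf{There is a genuine gap: your case split is inverted, and the reduction you rely on fails.} Since $t\mapsto h(t)/t$ is decreasing on $(0,\frac12]$, taking $t=\min\{q_i,1-q_i\}\le p$ gives $h(t)\ge\frac{h(p)}{p}\,t$, the opposite of what you claim. It is the case $\min\{q_i,1-q_i\}\ge p$ that closes at once via $\Delta\ge\min\{q_i,1-q_i\}$; your model coset $x_1+x_2=1$ is of this trivial type.

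The genuinely hard case is small marginals, and it cannot be avoided: for every linear subspace $C$, all marginals satisfy $q_i(C)\le p$ (Lemma~\ref{lmm:zero-min-error}). Worse, the sufficient condition you reduce to, $h(q_i)\le\frac{h(p)}{p}\Delta$ for some non-constant $i$, can fail for \emph{every} coordinate. For linear $C$, the marginals on $C$ and on the linear slice $C\cap\{x_i=0\}$ are all at most $\frac12$. Hence $\Delta=q_i\bigl[1+\sum_{j\ne i}(2q_j^{(1)}-1)_+\bigr]$, where $q_j^{(1)}$ are the marginals on $C\cap\{x_i=1\}$.

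Take the $[7,4,3]$ Hamming code with $p=0.1$. Conditioning on $x_i=1$ leaves every other marginal at about $11/31<\frac12$. So $\Delta=q_i\approx 0.0047$, while $h(q_i)\approx 0.043>\frac{h(p)}{p}q_i\approx 0.022$. By transitivity of the code, this happens for every $i$. Your fallback fails on the same example: for an information set $I$, $\sum_{i\in I}h(q_i)\approx 4(0.043)=0.172$ already exceeds the entire budget $\frac{h(p)}{p}e(C)\approx 0.153$, so no charging scheme can rescue it.

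\textbf{The missing idea.} Your recursion pays the concave price $h(q_i)$ per step, which is expensive precisely when $q_i$ is small. The paper never applies the entropy chain rule. With $b$ the bitwise MAP point, $H(P_A)$ equals the cross-entropy $n\log_2\frac{1}{1-p}+e(A)\log_2\frac{1-p}{p}$ minus $D_{\rm KL}(P_A\Vert P^b)$, and $D_{\rm KL}(P_A\Vert P^b)\ge\log_2\frac{1}{P^b(A)}$. So it suffices to prove the mass bound $P^b(A)\le(1-p)^{n-e(A)/p}$.

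This bound is reduced to the zero coset in two steps. First, $P^b(A)\le P(C)$, by nonnegativity of the Fourier coefficients of the syndrome law. Second, $e(C)\le e(A)$, by the four-point inequality \eqref{eq:four-point}. For linear $C$, the bound is then proved by induction on $n$ for $J(C)=-\log_2 P_C(0)$. That recursion, $J(C)=J(C_0)+\log_2\frac{1}{1-q}$, descends only into the linear slice $C_0$. Its increment is convex in $q$, so $q\le p$ is now the favorable direction: $\log_2\frac{1}{1-q}\le\frac{q}{p}\log_2\frac{1}{1-p}$.
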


Assuming \prettyref{lmm:affine-entropy}, the proof of the main result is immediate.

\begin{proof}[Proof of \prettyref{thm:finite-converse}]
For $s\in\FF_2^k$, define the affine space
\[
  A_s\triangleq\{x\in\FF_2^n:Hx=s\}
\]
which is non-empty because $H$ has full rank.

Conditioned on $S\equiv HX=s$, $X$ is distributed as $P_{A_s}$. Applying \prettyref{lmm:affine-entropy} yields the pointwise bound
\[
  H(X\mid S=s)
  \leq\frac{h(p)}{p} e(A_s).
\]
Averaging over $S$ and applying the definition \eqref{eq:distortion}, we get
\[
  H(X\mid S)
  \leq\frac{h(p)}{p}\Expect[e(A_S)]
  =\frac{h(p)}{p}\,nD(H).
\]
Since $S$ is a deterministic function of $X$ and takes values in $\FF_2^k$,
\[
  k\geq H(S)=H(X)-H(X\mid S)
  \geq nh(p)-\frac{h(p)}{p}nD(H).
\]
Dividing by $n$ proves \eqref{eq:finite-converse}.
\end{proof}

\section{Proof of \prettyref{lmm:affine-entropy}}
\label{sec:affine-comparison}

Let $X\sim P$.
Fix $A=a+C$ and choose an optimal (bitwise MAP) reconstruction point $b\in\FF_2^n$ for $P_A$, so that
\begin{equation}\label{eq:MAP-center}
  e(A)=\Expect_{P_A}[d_H(X,b)] = \Expect_{P_A}[|X+b|].
\end{equation}
Define the product Bernoulli measure shifted by $b$ as
 $P^b \triangleq \Law(X+b)$, namely,
\begin{equation}\label{eq:centered-product}
  P^b(x)
  = P(x+b) = p^{d_H(x,b)}(1-p)^{n-d_H(x,b)}.
\end{equation}
The key estimate is the following chain of inequalities:
\begin{equation}\label{eq:affine-mass-chain}
  P^b(A)=P(a+b+C)\leq P(C)
  \leq(1-p)^{n-e(C)/p}\leq(1-p)^{n-e(A)/p}.
\end{equation}
With this, we may complete the proof of \prettyref{lmm:affine-entropy} as follows.
Applying a change of measure from $P_A$ to $P^b$ and using
\eqref{eq:MAP-center}--\prettyref{eq:centered-product} to evaluate the cross entropy, we get
\[
  H(P_A)
  =\underbrace{n\log_2\frac1{1-p}
   +e(A)\log_2\frac{1-p}{p}}_{-\Expect_{X\sim P_A}[\log_2 P^b(X)]}
   -D_{\rm KL}(P_A\Vert P^b).
\]
where $D_{\rm KL}$ denotes the KL divergence in bits.
Since $P_A$ is supported on $A$, applying the data processing inequality to the indicator
$x\mapsto 1\{x\in A\}$ yields
\[
  D_{\rm KL}(P_A\Vert P^b)
  \geq\log_2\frac1{P^b(A)}.
\]
Combining these two displays with the final inequality in
\eqref{eq:affine-mass-chain}, we obtain
\[
  H(P_A)
  \leq e(A)\left[
     \log_2\frac{1-p}{p}
     +\frac1p\log_2\frac1{1-p}
   \right]
  =\frac{h(p)}p e(A),
\]
which is \eqref{eq:affine-entropy}. Thus it remains to prove
\eqref{eq:affine-mass-chain}.

\subsection{Reduction to zero coset}

The left and right inequalities in  \eqref{eq:affine-mass-chain} allow us to reduce an arbitrary coset to the zero coset. These are established by the following two lemmas.

\begin{lemma}\label{lmm:zero-mode}
Let $N\geq1$, and let $W=(W_1,\ldots,W_N)^\mathsf T$ have independent coordinates with
\[
  W_j\sim\Bern(r_j),
  \qquad 0< r_j\leq\frac12,
\]
and let $H\in\FF_2^{k\times N}$ be arbitrary. Define
$g(s)\triangleq\Prob(HW=s)$.
%\[
  %g(s)\triangleq\Prob(HW=s),
  %\qquad s\in\FF_2^k.
%\]
Then 
%for every $s\in\FF_2^k$
\begin{equation}\label{eq:zero-mode}
  %g(s)\leq g(0).
	g(0) = \max_{s\in\FF_2^k} g(s).
\end{equation}
Furthermore, for all $s,t\in\FF_2^k$,
% Reserve equation number 19 for the Fourier identity in the proof below.
\begin{equation}\label{eq:four-point}
  g(s)g(t) \leq g(0)g(s+t).
\end{equation}
\end{lemma}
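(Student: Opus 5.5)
The plan is to use Fourier analysis on $\FF_2^k$. Let $h_j\in\FF_2^k$ denote the $j$th column of $H$, so that $HW=\sum_j W_j h_j$. For $u\in\FF_2^k$ the Fourier transform of $g$ is
\[
  \hat g(u)\triangleq\sum_{s}g(s)(-1)^{u^\mathsf T s}
  =\Expect\qth{(-1)^{u^\mathsf T HW}}
  =\prod_{j=1}^N\pth{1-2r_j\Indc\{u^\mathsf T h_j=1\}}.
\]
Because $0<r_j\leq\frac12$, every factor lies in $[0,1)$, so $\hat g(u)\geq0$ for all $u$. This nonnegativity is the key structural fact. Inverting the transform gives
\[
  g(s)=2^{-k}\sum_{u}\hat g(u)(-1)^{u^\mathsf T s}.
\]

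The first claim \eqref{eq:zero-mode} then follows at once. Since $\hat g\geq0$,
\[
  g(s)\leq 2^{-k}\sum_u\hat g(u)=g(0).
\]

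For \eqref{eq:four-point}, I would first try a direct Fourier argument. Substituting the inversion formula,
\[
  g(0)g(s+t)-g(s)g(t)
  =4^{-k}\sum_{u,v}\hat g(u)\hat g(v)\qth{(-1)^{v^\mathsf T(s+t)}-(-1)^{u^\mathsf T s+v^\mathsf T t}}.
\]
Symmetrizing over $(u,v)$ and the swapped pair $(v,u)$ and over characters does not obviously yield a sum of squares. So as a fallback I would use a structural route that exploits independence.

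The fallback uses induction on $N$, adding one column at a time. Write $g_N$ for the law of $HW$ with $N$ columns. Adding a column $h$ with parameter $r\leq\frac12$ gives
\[
  g'(s)=(1-r)g(s)+r\,g(s+h).
\]
One checks that the inequality $g(s)g(t)\leq g(0)g(s+t)$ is preserved under this convolution. The base case is the point mass at $0$, where the inequality is trivial since both sides vanish unless $s=t=0$.

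The preservation step is the crux, and it needs the right induction hypothesis. Expanding $g'(0)g'(s+t)-g'(s)g'(t)$ produces terms weighted by $(1-r)^2$, $r^2$ and $r(1-r)$. The $(1-r)^2$ and $r^2$ terms are handled by the hypothesis applied at the shifted points. The cross terms require an inequality of the form
\[
  g(h)g(s+t)+g(0)g(s+t+h)\geq g(s)g(t+h)+g(s+h)g(t),
\]
which, combined with $r\leq\frac12$, is an extra condition that must also be carried along.

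Because of this, I expect the real obstacle to be choosing the correct invariant. A cleaner path is probably to restrict to the coset structure. Since $g$ is supported on the subgroup spanned by the columns, I would view $g(s)=\Prob(W\in w_s+\ker H)$, with $\ker H=C$, and prove the inequality as a statement about cosets of $C$.

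The natural candidate is a correlation inequality. The events $\{W\in C\}$ and $\{W\in w_s+w_t+C\}$ should be compared with $\{W\in w_s+C\}$ and $\{W\in w_t+C\}$ through a coupling of two independent copies $W,W'$. The map $(W,W')\mapsto(W+W',\,W\wedge W')$, which records the XOR and the coordinatewise minimum, sends the pair of cosets $(s,t)$ to the pair $(0,s+t)$ while preserving total mass. The bias $r_j\leq\frac12$ enters exactly through comparing $\Prob(W_j\neq W'_j)=2r_j(1-r_j)$ with the probability of agreement.

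I would try this coupling first. I would verify it coordinate by coordinate: conditionally on the XOR pattern $W+W'=z$, with $z\in w_s+w_t+C$, the remaining freedom is a product measure. Showing that $(0,s+t)$ dominates $(s,t)$ then reduces to the single-coordinate fact that agreement outweighs disagreement, namely $(1-r)^2+r^2\geq 2r(1-r)$.
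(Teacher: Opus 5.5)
Your proof of \eqref{eq:zero-mode} is the paper's Fourier-positivity argument and is fine. The gap is in \eqref{eq:four-point}, where the coupling you describe does not do what you claim.

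First, the map fails. If $HW=s$ and $HW'=t$, then $H(W+W')=s+t$, but $H(W\wedge W')$ is not determined and need not be $0$. The map $(W,W')\mapsto(W+W',W\wedge W')$ is also not injective: $(0,1)$ and $(1,0)$ collide. Nor does it preserve the product measure, since $W_j\wedge W'_j\sim\Bern(r_j^2)$.

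Second, and more importantly, the final reduction is misidentified. The inequality $(1-r)^2+r^2\geq 2r(1-r)$ is just $(1-2r)^2\geq0$, which holds for every $r$. But \eqref{eq:four-point} fails for $r>\frac12$: with $N=k=1$, $H=[1]$, $s=t=1$, it reads $r^2\leq(1-r)^2$. So this cannot be where the hypothesis enters. In fact no coordinatewise fact can close the argument.

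The missing step is as follows. Use the bijection $(W,W')\mapsto(W,Z)$ with $Z=W+W'$. This gives $g(0)g(s+t)=\Prob(HW=0,HZ=s+t)$ and $g(s)g(t)=\Prob(HW=s,HZ=s+t)$, so
\[
g(0)g(s+t)-g(s)g(t)=\sum_{z:Hz=s+t}\Prob(Z=z)\bigl[\Prob(HW=0\mid Z=z)-\Prob(HW=s\mid Z=z)\bigr].
\]
Given $Z=z$, $W$ is a product measure with parameter $r_j^2/(r_j^2+(1-r_j)^2)$ when $z_j=0$ and $\frac12$ when $z_j=1$.

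The hypothesis $r_j\leq\frac12$ enters exactly through $r_j^2\leq(1-r_j)^2$. The relevant single-coordinate fact is that, among agreements, $(0,0)$ outweighs $(1,1)$, not that agreement outweighs disagreement. Each bracket is then nonnegative by \eqref{eq:zero-mode} applied to this conditional product law with the same $H$. That is a global coset statement requiring the Fourier argument, not a coordinatewise one.

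Your last paragraph conditions on the right variable. Without this step, however, the proof of \eqref{eq:four-point} is incomplete.
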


\begin{proof}
Write $h_j$ for the $j$th column of $H$.
Recall the Fourier inversion on $\FF_2^k$ (with respect to the uniform distribution):
for any $f:\FF_2^k\to\reals$,
\[
f(s) = \sum_{u \in \FF_2^k} \hat f(u) (-1)^{u\cdot s}, \quad \hat f(u) \triangleq 2^{-k}
\sum_{s \in \FF_2^k} f(s) (-1)^{u\cdot s}.
\]
Applying this to the function $g$, whose Fourier coefficient is
\[
\hat g(u) =
2^{-k}
\sum_{s \in \FF_2^k} \Prob(HW=s) (-1)^{u\cdot s} =
2^{-k} \Expect\left[(-1)^{u\cdot HW}\right]
= 2^{-k} \prod_{j:u\cdot h_j=1}(1-2r_j),
\]
we get
\begin{align}
  g(s)
  &=2^{-k}\sum_{u\in\FF_2^k}
    (-1)^{u\cdot s}
    \prod_{j:u\cdot h_j=1}(1-2r_j).
  \label{eq:inhomogeneous-syndrome-fourier}
\end{align}
Since every product in the last sum is nonnegative, replacing $s$ by $0$ yields an upper bound, proving \eqref{eq:zero-mode}.

To prove \eqref{eq:four-point}, let $W'$ be an independent copy of $W$, and set $Z\triangleq W+W'$. Because the pairs $(W_j,W_j')$ are independent across $j$, the coordinates of $W$ are independent conditioning on any $Z=z$, with
\begin{equation}\label{eq:conditional-copy-parameters}
  \Prob(W_j=1\mid Z=z)
  =
  \begin{cases}
    \displaystyle\frac{r_j^2}{r_j^2+(1-r_j)^2},&z_j=0\\[1.2ex]
    \displaystyle\frac12,&z_j=1
  \end{cases}
	\quad\leq \frac{1}{2}.
\end{equation}
Moreover, independence of $W$ and $W'$ gives
\begin{align*}
  g(0)g(s+t)
  &=\Prob(HW=0,HW'=s+t)
   =\Prob(HW=0,HZ=s+t),\\
  g(s)g(t)
  &=\Prob(HW=s,HW'=t)
   =\Prob(HW=s,HZ=s+t).
\end{align*}
Consequently,
\[
  g(0)g(s+t)-g(s)g(t)
  =\sum_{z:Hz=s+t}\Prob(Z=z)
    \bigl[
      \Prob(HW=0\mid Z=z)
      -\Prob(HW=s\mid Z=z)
    \bigr]\geq0,
\]
where the inequality follows from \eqref{eq:zero-mode} applied to the conditional product law in \eqref{eq:conditional-copy-parameters} for each $z$.
\end{proof}

%Let $q_i(A)$ denote the marginal of the posterior $P_A = \Law(X|X\in A)$.
Let $q_i(A) = P(X_i=1|X=A)$ denote the $i$th marginal of the posterior.
\begin{lemma}\label{lmm:zero-min-error}
Let $C$ be a linear subspace of $\FF_2^n$.
Then, for every coordinate $i$ and every
$a\in\FF_2^n$,
\begin{equation}\label{eq:q-zero-le-p}
  q_i(C)\leq p,
\end{equation}
\begin{equation}\label{eq:coordinate-extremality}
  q_i(C)\leq q_i(a+C)\leq1-q_i(C).
\end{equation}
Consequently,
\begin{equation}\label{eq:zero-min-error}
  e(C)\leq e(a+C)
\end{equation}
and 
\begin{equation}
e(C) = \Expect_{P_C}[|X|].
\label{eq:bitmap-zero}
\end{equation}
\end{lemma}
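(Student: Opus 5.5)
The plan is to express all the quantities involving cosets of $C$ through the syndrome function $g$ of Lemma~\ref{lmm:zero-mode}. Choose a full-rank parity-check matrix $H$ with $C=\ker H$. Set $g(s)=\Prob(HX=s)$, which is the $N=n$, $r_j=p$ case of Lemma~\ref{lmm:zero-mode}. Then $P(a+C)=g(Ha)$. Write $h_i$ for the $i$th column of $H$ and $X_{-i}$ for $X$ with coordinate $i$ deleted, and put $g_{-i}(s)=\Prob(H_{-i}X_{-i}=s)$, the syndrome law of the remaining coordinates. This is again of the form in Lemma~\ref{lmm:zero-mode}, with $N=n-1$ (the case $n=1$ is trivial). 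Conditioning on $X_i$ gives, for $s=Ha$,
\[
P(X_i=1,\, X\in a+C)=p\,g_{-i}(s+h_i),\qquad P(X_i=0,\, X\in a+C)=(1-p)\,g_{-i}(s).
\]
Hence
\[
q_i(a+C)=\frac{p\,g_{-i}(s+h_i)}{p\,g_{-i}(s+h_i)+(1-p)\,g_{-i}(s)}.
\]

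For \eqref{eq:q-zero-le-p}, take $s=0$. By \eqref{eq:zero-mode}, $g_{-i}(h_i)\le g_{-i}(0)$, so $q_i(C)\le p\,g_{-i}(h_i)/(p\,g_{-i}(h_i)+(1-p)g_{-i}(h_i))=p$. This uses that $x\mapsto px/(px+(1-p)y)$ is increasing in $x$ and decreasing in $y$.

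For the left inequality of \eqref{eq:coordinate-extremality}, note that $q_i$ is a monotone function of the ratio $g_{-i}(s+h_i)/g_{-i}(s)$. It therefore suffices to show
\[
\frac{g_{-i}(h_i)}{g_{-i}(0)}\le\frac{g_{-i}(s+h_i)}{g_{-i}(s)},\qquad\text{i.e.}\qquad g_{-i}(h_i)\,g_{-i}(s)\le g_{-i}(0)\,g_{-i}(s+h_i).
\]
This is exactly \eqref{eq:four-point} applied to $g_{-i}$ with $t=h_i$. The right inequality follows by symmetry. The quantity $1-q_i(a+C)$ has the same form with the ratio inverted, namely $g_{-i}(s)/g_{-i}(s+h_i)$. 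Applying \eqref{eq:four-point} with $s$ replaced by $s+h_i$ gives $g_{-i}(h_i)\,g_{-i}(s+h_i)\le g_{-i}(0)\,g_{-i}(s)$, which yields $1-q_i(a+C)\ge q_i(C)$.

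The one place to be careful is the degenerate case $g_{-i}(s)=0$ or a zero denominator. I would handle it by a limiting or direct check, since both sides are well-defined whenever $P(a+C)>0$; this always holds because $p>0$.

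The consequences are then immediate. From \eqref{eq:coordinate-extremality}, $\min\{q_i(a+C),1-q_i(a+C)\}\ge q_i(C)$. Moreover, $q_i(C)\le p\le\tfrac12$ gives $\min\{q_i(C),1-q_i(C)\}=q_i(C)$. Summing over $i$ yields
\[
e(C)=\sum_i q_i(C)=\Expect_{P_C}[|X|]\le e(a+C),
\]
which proves both \eqref{eq:zero-min-error} and \eqref{eq:bitmap-zero}. The main obstacle is recognizing that the monotone-ratio form reduces everything to \eqref{eq:four-point} with $t=h_i$; once that is set up, the rest is bookkeeping.
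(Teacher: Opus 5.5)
Your setup matches the paper's: delete coordinate $i$, express $q_i(a+C)$ through $g_{-i}$, use \eqref{eq:zero-mode} for \eqref{eq:q-zero-le-p}, and use \eqref{eq:four-point} with $t=h_i$ at $s$ and at $s+h_i$. The left inequality of \eqref{eq:coordinate-extremality} is correct. There is one genuine gap, in the right inequality. The quantity
\[
1-q_i(a+C)=\frac{(1-p)\,g_{-i}(s)}{(1-p)\,g_{-i}(s)+p\,g_{-i}(s+h_i)}
\]
is not ``the same form with the ratio inverted'': flipping the bit also swaps the prior weights $p$ and $1-p$. So \eqref{eq:four-point} at $(s+h_i,h_i)$ only gives $1-q_i(a+C)\geq \frac{(1-p)g_{-i}(h_i)}{(1-p)g_{-i}(h_i)+p\,g_{-i}(0)}$. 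This dominates $q_i(C)=\frac{p\,g_{-i}(h_i)}{p\,g_{-i}(h_i)+(1-p)g_{-i}(0)}$ only because $(1-p)^2\geq p^2$.

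This use of $p\leq\frac12$ cannot be skipped. If $h_i=0$ (e.g.\ $C=\FF_2^n$), both instances of \eqref{eq:four-point} you invoke are tautologies, yet the conclusion reads $p\leq 1-p$. You invoke $p\leq\frac12$ only later, for \eqref{eq:bitmap-zero}. The paper uses it exactly here: it writes the four-point bound as $\frac{(1-p)^2}{p^2}\leq\frac{1-q_i(a+C)}{q_i(a+C)}\cdot\frac{1-q_i(C)}{q_i(C)}$ and then replaces the left side by $1$.

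For the degenerate cases, you don't need a limiting argument; cross-multiply by the positive denominators $P(C)$ and $P(a+C)$. The left inequality becomes equivalent to $(1-p)\,g_{-i}(h_i)g_{-i}(s)\leq(1-p)\,g_{-i}(0)g_{-i}(s+h_i)$. The right one becomes $p^2\,g_{-i}(h_i)g_{-i}(s+h_i)\leq(1-p)^2\,g_{-i}(0)g_{-i}(s)$. Both remain valid when some values of $g_{-i}$ vanish, and the second makes the role of $p\leq\frac12$ explicit. This also sidesteps the paper's footnote, which shows $g_i(s)=0$ iff $g_i(s+h_i)=0$ when $g_i(h_i)>0$. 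Likewise, in \eqref{eq:q-zero-le-p} your displayed bound is $0/0$ when $g_{-i}(h_i)=0$, but then $q_i(C)=0$ directly. With this fix, the rest of your argument, including both consequences, goes through as written.
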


\begin{proof}
The assertion is immediate when $n=1$, so assume $n\geq2$.  Choose a
full-row-rank matrix $H$ with $C=\ker H$, 
and define the coset
\[
  A_s\triangleq\{x\in\FF_2^n:Hx=s\}.
\]
Because $H$ has full row rank, every $A_s$ is nonempty.
With this notation, we have $C=A_0$ and $a+C = A_{Ha}$.
Define the syndrome PMF
\[
  f(s)\triangleq\Prob(HX=s)=P(A_s)>0.
\]
For fixed $i$, let $h_i$ denote the $i$th column of $H$ and $H_{-i}$ the submatrix with $h_i$ removed. Similarly, let $X_{-i} = (X_j: j \neq i)$.
Denote the PMF of $H_{-i}X_{-i}$ by
\[
  %Y_i\triangleq\sum_{j\neq i}h_jX_j,
  %\qquad
  g_i(r)\triangleq\Prob(H_{-i}X_{-i}=r),
\]
Conditioning on $X_i$ yields
\begin{equation}\label{eq:syndrome-pmf-deleted}
	f(s)
	%\Prob(HX=s)
	= \Prob(H_{-i}X_{-i}=s,X_i=0) + \Prob(H_{-i}X_{-i}=s+h_i,X_i=1) = (1-p)g_i(s)+p g_i(s+h_i).
\end{equation}

Recall that $q_i(A_s)=P_{A_s}(X_i=1)$. Bayes' rule gives\footnote{If $g_i(h_i)=0$, then $q_i(C)=p g_i(h_i)/f(0)=0$, so
\eqref{eq:q-zero-le-p} and \eqref{eq:coordinate-extremality} are immediate.
We may therefore assume that $g_i(h_i)>0$. For every $s$, applying
\eqref{eq:four-point} to the pairs $(s,h_i)$ and $(s+h_i,h_i)$ shows that
$g_i(s)=0$ if and only if $g_i(s+h_i)=0$. Since $f(s)>0$, both quantities
are positive, and all the ratios in \prettyref{eq:coordinate-posterior-deleted}--\prettyref{eq:coordinate-posterior-deleted0} are well-defined.}
\begin{equation}\label{eq:coordinate-posterior-deleted}
  q_i(A_s)
	=\frac{p }{p+(1-p) \frac{g_i(s)}{g_i(s+h_i)}}
  %\qquad
  %1-q_i(A_s)=\frac{(1-p)g_i(s)}{(1-p)g_i(s)+p g_i(s+h_i)}.
\end{equation}
and in particular for $s=0$,
\begin{equation}\label{eq:coordinate-posterior-deleted0}
  %q_i(C)=\frac{p g_i(h_i)}{(1-p)g_i(0)+p g_i(h_i)}
	q_i(C)=\frac{p }{p+(1-p) \frac{g_i(0)}{g_i(h_i)}}
  %\qquad
  %1-q_i(C)=\frac{(1-p)g_i(0)}{(1-p)g_i(0)+p g_i(h_i)}.
\end{equation}
Applying \prettyref{lmm:zero-mode} to
the matrix $H_{-i}$, we get $g_i(h_i)\leq g_i(0)$ and hence $q_i(C)\leq p$, proving
\eqref{eq:q-zero-le-p}.

To show the two-sided bound \prettyref{eq:coordinate-extremality}, applying \eqref{eq:four-point} to $H_{-i}$ gives
\begin{align*}
	\frac{g_i(h_i)}{g_i(0)}  \leq \frac{g_i(s)}{g_i(s+h_i)} \leq \frac{g_i(0)}{g_i(h_i)}
\end{align*}
The second inequality applied to \eqref{eq:coordinate-posterior-deleted}--\eqref{eq:coordinate-posterior-deleted0} yields $q_i(A_s)\geq q_i(C)$.
For the other direction, the first
inequality is equivalent to
\[
\frac{(1-p)^2}{p^2} \leq \frac{1-q_i(A_s)}{q_i(A_s)} \frac{1-q_i(C)}{q_i(C)}.
\]
Since $p\leq1/2$, replacing the LHS by 1 yields $q_i(A_s)\leq1-q_i(C)$.
%Therefore
%\[
  %q_i(C)\leq q_i(A_s)\leq1-q_i(C).
%\]
%Since $a+C=A_{Ha}$,
This proves \eqref{eq:coordinate-extremality}.

Finally, \eqref{eq:coordinate-extremality} is equivalent to
$\min\{q_i(a+C),1-q_i(a+C)\}\geq q_i(C)$. 
Summing over $i$ proves \prettyref{eq:zero-min-error}.
By \prettyref{eq:q-zero-le-p}, $q_i(C)\leq p \leq 1/2$ so the bit-MAP decoder outputs 0
given $X \in C$ and hence \prettyref{eq:bitmap-zero}.
\end{proof}

\subsection{Bounding the mass of zero coset}

The following lemma bounds the Bernoulli measure of a linear subspace by its radius.
Note that the upper bound is tight for the extreme case of $C=\{0\}$ and $C=\bbF_2^n$.
\begin{lemma}\label{lmm:mass-energy}
For every linear subspace $C$ of $\FF_2^n$,
\begin{equation}\label{eq:mass-energy}
  P(C)
  \leq
  (1-p)^{n-e(C)/p}.
\end{equation}
\end{lemma}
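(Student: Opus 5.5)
We plan to prove \eqref{eq:mass-energy} by interpolating in the parameter $p$ and integrating a derivative identity. Write $P_\theta$ for the product Bernoulli measure with parameter $\theta\in(0,\frac12]$ and $m(\theta)\triangleq\Expect_{(P_\theta)_C}[|X|]=\sum_i q_i(C;\theta)$. By \eqref{eq:bitmap-zero} of \prettyref{lmm:zero-min-error}, $m(p)=e(C)$.

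\textbf{Step 1: reparametrize by the tilt.} Set $\theta(t)=1/(1+e^t)$, so $e^{-t}=\theta/(1-\theta)$, and let $Z(t)=\sum_{x\in C}e^{-t|x|}$. Then
\[
P_\theta(C)=(1-\theta)^n Z(t), \qquad \frac{d}{dt}\log Z(t)=-m(\theta(t)).
\]
Since $Z(t)\to1$ as $t\to\infty$ (only $x=0$ survives), we get
\[
\log Z(t_0)=\int_{t_0}^\infty m(\theta(s))\,ds .
\]
Here $t_0\geq0$ is the tilt with $\theta(t_0)=p$. The target inequality is equivalent to $Z(t_0)\leq(1-p)^{-e(C)/p}$, that is,
\[
\int_{t_0}^\infty m(\theta(s))\,ds\leq \frac{m(p)}{p}\log\frac1{1-p}.
\]

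\textbf{Step 2: reduce to a monotonicity statement.} We note the elementary identity
\[
\int_{t_0}^\infty \theta(s)\,ds=\log(1+e^{-t_0})=\log\frac1{1-p}.
\]
Hence it suffices to show that $m(\theta)/\theta$ is nondecreasing in $\theta$ on $(0,\frac12]$. Indeed, for $s\geq t_0$ we have $\theta(s)\leq p$, which would give $m(\theta(s))\leq \frac{m(p)}{p}\theta(s)$; integrating then yields the claim. It is enough to prove the monotonicity coordinatewise for $q_i(C;\theta)/\theta$. By \eqref{eq:coordinate-posterior-deleted0},
\[
\frac{q_i(C;\theta)}{\theta}=\frac{1}{1+(1-\theta)(\rho_i(\theta)-1)}, \qquad \rho_i(\theta)=\frac{g_i(0)}{g_i(h_i)}\geq1,
\]
where $\rho_i\geq1$ follows from \prettyref{lmm:zero-mode}. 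Since $1-\theta$ is decreasing and positive, it suffices to show that $\rho_i(\theta)$ is nonincreasing in $\theta$.

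\textbf{Step 3 (main obstacle): monotonicity of $\rho_i$.} We would use the Fourier formula \eqref{eq:inhomogeneous-syndrome-fourier} with $\lambda=1-2\theta\in[0,1)$:
\[
g_i(0)\pm g_i(h_i)\;\propto\;\sum_{u:\,u\cdot h_i=0\text{ or }1}\lambda^{w(u)},
\]
where $w(u)=|u^{\mathsf T}H_{-i}|$ and all coefficients are nonnegative. Thus $\rho_i$ is nonincreasing in $\theta$ if and only if the ratio
\[
\frac{\sum_{u\cdot h_i=1}\lambda^{w(u)}}{\sum_{u\cdot h_i=0}\lambda^{w(u)}}
\]
is nondecreasing in $\lambda$. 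Its logarithmic derivative is $\lambda^{-1}$ times the difference of the mean of $w(u)$ under the two Gibbs-type measures on the coset and on the subgroup $\{u\cdot h_i=0\}$ of the dual code. Our first attempt would be to show this difference is $\leq0$ in the right direction via a GKS/Griffiths-type correlation argument. Equivalently, in primal terms, we would show $\frac{d}{d\theta}\log\frac{g_i(0)}{g_i(h_i)}\leq0$ by expressing this derivative as a covariance under the conditioned product measure. We would then prove its sign by the same duplication trick used for \eqref{eq:four-point}: take an independent copy, condition on $Z=W+W'$, and use that the conditional law is product Bernoulli with parameters $\leq\frac12$, so \prettyref{lmm:zero-mode} applies pointwise in $z$. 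If the direct duplication argument fails to yield monotonicity, the fallback is to prove the weaker integrated inequality of Step 1 by induction on $n$ via the deletion recursion \eqref{eq:syndrome-pmf-deleted}.
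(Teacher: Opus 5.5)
Your Steps 1 and 2 are correct. The representation $\log Z(t_0)=\int_{t_0}^\infty m(\theta(s))\,ds$, the identity $\int_{t_0}^\infty\theta(s)\,ds=\log\frac{1}{1-p}$, and the reduction via \eqref{eq:coordinate-posterior-deleted0} to showing that $\rho_i(\theta)=g_i(0)/g_i(h_i)$ is nonincreasing on $(0,\frac12]$ all check out. The gap is Step 3. It carries the entire content of the lemma, and it is not proved: you list a GKS-type attempt, a covariance/duplication attempt and a fallback induction, but carry out none of them. As written, the proposal therefore does not establish \eqref{eq:mass-energy}. What must be signed is only a difference of expected Hamming weights between the posterior on a linear space and the posterior on one of its cosets, and no new correlation inequality is needed for that.

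For any affine $A\subseteq\FF_2^N$, differentiating $P_\theta(A)=\sum_{x\in A}\theta^{|x|}(1-\theta)^{N-|x|}$ gives
\[
  \frac{d}{d\theta}\log P_\theta(A)=\frac{\Expect_{(P_\theta)_A}[|X|]-N\theta}{\theta(1-\theta)}.
\]
Let $C'=\ker H_{-i}\subseteq\FF_2^{n-1}$ and $A'=\{y:H_{-i}y=h_i\}$. If $A'=\emptyset$, then $q_i(C;\theta)\equiv0$ and there is nothing to prove. Otherwise $A'$ is a coset of $C'$, $\rho_i(\theta)=P_\theta(C')/P_\theta(A')$, and
\[
  \frac{d}{d\theta}\log\rho_i(\theta)=\frac{1}{\theta(1-\theta)}\sum_{j=1}^{n-1}\bigl(q_j(C';\theta)-q_j(A';\theta)\bigr)\leq0
\]
by \eqref{eq:coordinate-extremality} of \prettyref{lmm:zero-min-error} applied at parameter $\theta\leq\frac12$. (In your dual formulation, the same inequality at parameter $\lambda/(1+\lambda)\leq\frac12$, applied to the subcode $\{u^{\mathsf T}H_{-i}:u\cdot h_i=0\}$ and its coset, gives the required sign directly.) With this filled in, your argument is a valid alternative to the paper's. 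The paper inducts on $n$ using the chain rule $J(C)=J(C_0)+D_{\rm KL}(\delta_0\Vert\Bern(q))$, the convexity bound $D_{\rm KL}(\delta_0\Vert\Bern(q))\leq\frac{q}{p}\beta$, and $e(C)\geq e(C_0)+q$, so it invokes \prettyref{lmm:zero-min-error} only at the single parameter $p$. Your interpolation in the Bernoulli parameter needs that lemma at every $\theta\leq p$. In return it yields the pointwise fact that $\theta\mapsto e_\theta(C)/\theta$ is nondecreasing on $(0,\frac12]$, from which \eqref{eq:mass-energy} follows by integration.
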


\begin{proof}
Let $\beta\triangleq D_{\rm KL}\bigl(\delta_0\Vert
  \Bern(p)\bigr)
  =\log_2\frac1{1-p}$. 
Define
\[
  J(C)\triangleq D_{\rm KL}\bigl(\delta_{0}\Vert P_C\bigr)
  =-\log_2 P_C(0)
	= -\log_2 \frac{(1-p)^n}{P(C)} = n \beta+\log_2 P(C).
\]
Thus the desired \eqref{eq:mass-energy} is equivalent to
\begin{equation}\label{eq:zero-atom-divergence-bound}
  J(C)\leq\frac{\beta}{p}e(C).
\end{equation}

We prove \eqref{eq:zero-atom-divergence-bound} by induction on $n$.  When
$n=1$, the two possible linear subspaces are $\{0\}$ and $\FF_2$, and both
give equality.  Assume $n\geq2$, and define
\[
  C_t\triangleq\{z\in\FF_2^{n-1}:(z,t)\in C\},
  \qquad t\in\{0,1\}.
\]
Let
\[
  q\triangleq P_C(X_n=1).
\]
By \prettyref{lmm:zero-min-error}, $q\leq p$.

Suppose first that $C_1$ is nonempty.  Then $C_0$ is linear and $C_1$ is an
affine coset of $C_0$.  Let $Z$ denote the first $n-1$ coordinates of $X$ and
define
\[
  m_t\triangleq\Expect_{P_{C_t}}[|Z|],
  \qquad t\in\{0,1\}.
\]
Due to the marginal independence of $Z$ and $X_n$, 
under $P_C$, the conditional law of $Z$ given $X_n=t$ is $P_{C_t}$.
Consequently, 
\begin{align*}
  e(C)
  &=\Expect_{P_C}|X|=\Expect_{P_C}|Z|+P_C(X_n=1)\\
  &=P_C(X_n=0)\Expect_{P_{C_0}}|Z|
    +P_C(X_n=1)\Expect_{P_{C_1}}|Z|+P_C(X_n=1)
    =(1-q)m_0+qm_1+q.
\end{align*}
Applying \prettyref{eq:zero-min-error} and \prettyref{eq:bitmap-zero} in 
\prettyref{lmm:zero-min-error} to the linear space $C_0$ and its
affine coset $C_1$ yields
%and then summing over the coordinates, gives
\[
  %m_0=e(C_0),
  %\qquad
  m_1\geq e(C_1) \geq e(C_0) =  m_0.
\]
Thus 
\begin{equation}\label{eq:expected-weight-error-recursion}
  e(C)\geq e(C_0)+q.
\end{equation}
If $C_1$ is empty, then $q=0$, $C=C_0\times\{0\}$, and
$e(C)=e(C_0)$, so \eqref{eq:expected-weight-error-recursion} also holds.

In order to apply induction, we apply the KL chain rule to obtain
\begin{equation}\label{eq:zero-atom-KL-recursion}
  J(C)
  =J(C_0)+D_{\rm KL}\bigl(\delta_0\Vert
    \Bern(q)\bigr).
\end{equation}
Because $q\leq p$, by convexity of KL divergence, 
\begin{equation}
%\log_2 \frac{1}{1-q} = 
D_{\rm KL}\bigl(\delta_0\Vert
    \Bern(q)\bigr) \leq 
		\frac{q}{p} D_{\rm KL}\bigl(\delta_0\Vert
    \Bern(p)\bigr)
  =\frac{q\beta}{p}.
\label{eq:smallKL}
\end{equation}
Thus 
\[
  J(C)\leq J(C_0)+\frac{q\beta}{p}
  \leq\frac{\beta}{p}\bigl(e(C_0)+q\bigr)
  \leq\frac{\beta}{p}e(C),
\]
where the middle inequality applies the induction hypothesis for $n-1$ to $C_0$,
and the last inequality is \eqref{eq:expected-weight-error-recursion}.
This completes the proof.
\end{proof}

Finally, note that the left, middle, and right inequality in \eqref{eq:affine-mass-chain} follows from
 \prettyref{lmm:zero-mode}, \prettyref{lmm:mass-energy}, and
\prettyref{lmm:zero-min-error}, respectively. This completes the proof of
\prettyref{lmm:affine-entropy}.

\section{Discussion}
\label{sec:discussions}

\subsection{Connections with the literature}
\label{sec:literature-connections}

While \prettyref{lmm:affine-entropy} may appear new, several of its supporting results
%, namely, \prettyref{lmm:zero-mode}--\prettyref{lmm:mass-energy}, 
have appeared in or can be deduced from the existing literature on coding theory and spin glass theory.
\begin{itemize}
	\item 
	For \prettyref{eq:zero-mode}, 
	in the homogeneous case (all $r_j$ are the same), 
	Sullivan \cite{Sullivan1967} showed that zero is the mode of the syndrome $HW$.
	(In fact, a stronger lower bound on $g(0)/g(s)$ is given in \cite[Theorem 1]{Sullivan1967}.)
	Applying this to $H_{-i}$ also implies 
	\prettyref{eq:q-zero-le-p} and \prettyref{eq:bitmap-zero}.

	\item Of direct relevance is the result of 
Cammarota and
Russo on Bernoulli measures of cosets \cite{CammarotaRusso1991}. 
In particular, 
\eqref{eq:four-point} (and hence \eqref{eq:zero-mode}) follows from
\cite[Proposition~3.1]{CammarotaRusso1991}.

	\item The connection to spin glass stems from the fact that the posterior distribution $\Law(X|HX=s)$ is an Ising model,\footnote{This observation was also used in \cite{Macris2007} and the GKS inequality for Gaussian coupling \cite{Morita2004} is applied to analyze MAP coding of LDPC codes over binary-input Gaussian channel.} with interaction order given by the sparsity of each row of $H$. To see this, 
	let $\sigma_i\triangleq(-1)^{x_i}$ and $\sigma_S \triangleq \prod_{i \in S} \sigma_i$.
Denote the support of the $\ell$th row of $H$ as 
$\partial\ell\triangleq\{i:H_{\ell i}=1\}$. 
For the coset $A_s=\{x:Hx=s\}$, the posterior $P_{A_s}=\Law(X\mid HX=s)$ 
(induced on the $\pm 1$ spins) is the hard-constraint Gibbs measure
\begin{equation}\label{eq:posterior-hard-ising}
  P_{A_s}(\sigma)
  \propto
  \exp\left\{\lambda\sum_{i=1}^n\sigma_i\right\}
  \prod_{\ell=1}^k
  \Indc_{\left\{\sigma_{\partial\ell}
    =(-1)^{s_\ell}\right\}},
\end{equation}
where $\lambda \triangleq\frac12\ln\frac{1-p}{p}\geq0$.
Note that $P_{A_s}$ is the limit of $P_{A_s}^{(\beta)}(\sigma)$ as $\beta\to \infty$:
\[
P_{A_s}^{(\beta)}(\sigma)
 \propto 
\exp\left\{\lambda \sum_{i=1}^n\sigma_i
+ \beta \sum_{\ell=1}^k(-1)^{s_\ell} \sigma_{\partial\ell}\right\}	
\]

Crucially, for $s=0$ and $C = A_0=\ker(H)$, $P_C^{(\beta)}$ is \emph{ferromagnetic} so is the limiting $P_C$. (This is why the posterior for the zero coset is special.)
Thus 	the correlation inequality of 
	Griffiths-Kelly-Sherman (GKS) \cite{Griffiths1967,KellySherman1968} is applicable: 
	$\Cov_{P_C}(\sigma_{A},\sigma_{B}) \geq 0$ for any subset $A,B$. 
	From here, we can arrive at 
	\begin{equation}
	P_C(X=0) \geq \prod_{i=1}^n P_C(X_i=0)
	\label{eq:PC0}
	\end{equation}
	which combined with \prettyref{eq:smallKL} implies \prettyref{eq:zero-atom-divergence-bound} and hence \prettyref{lmm:mass-energy}.
	
	To show \prettyref{eq:PC0}, consider the linear subspace $C_i = C \cap \{X_1=0, \ldots, X_i=0\} = C_{i-1}\cap \{X_i=0\}$.
	For each $j\leq i-1$, applying GKS to $P_{C_{j-1}}$ yields 
	$P_{C_{j-1}}(X_i=0,X_j=0)\geq P_{C_{j-1}}(X_i=0) P_{C_{j-1}}(X_j=0)$, namely, 
	$P_{C_j}(X_i=0)\geq P_{C_{j-1}}(X_i=0)$. Repeated application yields
	$P_{C_{i-1}}(X_i=0)\geq P_{C}(X_i=0)$. Multiplying this from $i=1$ to $n$ gives \prettyref{eq:PC0}.
	(Note that \prettyref{eq:PC0} also follows from the ``conditional
G-regularity'' in \cite[Theorem~1.1]{CammarotaRusso1991}.)

\end{itemize}

\subsection{Equivalent formulation for channel coding}\label{sec:converse-achievability}

In view of the operational duality between source and channel coding \cite{Massey1978}, \prettyref{thm:finite-converse} may be phrased equivalently for the binary symmetric channel (BSC) with crossover probability $p$ as follows.
%The converse also has a channel-coding interpretation. 
Let $\calC$ be a linear codebook of dimension $m$ in $\FF_2^n$, whose communication rate is
\[
  R\triangleq\frac{m}{n} \quad \text{bits/channel use}.
\]
Choose a codeword $c$ uniformly from $\calC$.
The BSC channel output is $Y=c+Z$, where the noise $Z\sim P$ is independent of $c$. Let $P_{\mathrm b}$ denote the average bit error rate under the optimal bitwise MAP decoding of the codeword. 
The capacity of the binary symmetric channel is
\[
  C_{\mathrm{BSC}}\triangleq1-h(p).
\]
If $R>C_{\mathrm{BSC}}$, reliable communication is impossible and hence $P_b$ is bounded away from zero. The following is a precise lower bound for linear codes:

\begin{corollary}\label{cor:channel-form}
For $C_{\mathrm{BSC}}\leq R \leq 1$,
\begin{equation}\label{eq:channel-form}
  P_{\mathrm b}
  \geq
  \frac{p}{h(p)}
  \bigl(R-C_{\mathrm{BSC}}\bigr).
\end{equation}
\end{corollary}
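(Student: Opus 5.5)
We derive Corollary~\ref{cor:channel-form} from Lemma~\ref{lmm:affine-entropy} applied to the cosets of the code, much as in the proof of Theorem~\ref{thm:finite-converse}.

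Write $\calC=\ker H$ for a full-row-rank $H\in\FF_2^{(n-m)\times n}$. Bitwise MAP decoding of $c$ from $Y$ is equivalent to bitwise MAP estimation of the noise $Z=Y+c$, because $c_i=Y_i+Z_i$ and $Y$ is observed. Given $Y=y$, the posterior of $Z$ is $P$ conditioned on the coset $A=y+\calC$: the prior on $c$ is uniform on $\calC$, so the posterior of $Z=y+c$ is proportional to $P(z)$ on $y+\calC$. Hence
\[
  \Expect[d_H(\hat c,c)\mid Y=y]=e(y+\calC),
\]
and this depends on $y$ only through the syndrome $S=HY=HZ$. Averaging over $Y$ gives
\[
  nP_{\mathrm b}=\Expect[e(A_S)]=nD(H),
\]
with $D(H)$ as in \eqref{eq:distortion} and the source $X$ replaced by $Z$.

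Now apply Theorem~\ref{thm:finite-converse} with $k=n-m$, which gives $(n-m)/n\geq h(p)(1-P_{\mathrm b}/p)$. Equivalently, one can repeat the entropy chain directly:
\[
  n-m\geq H(S)=nh(p)-H(Z\mid S)\geq nh(p)-\frac{h(p)}{p}\,nP_{\mathrm b}.
\]
Rearranging, we get $1-R\geq h(p)-\frac{h(p)}{p}P_{\mathrm b}$, that is,
\[
  P_{\mathrm b}\geq \frac{p}{h(p)}\bigl(R-(1-h(p))\bigr),
\]
which is \eqref{eq:channel-form}.

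The only step that needs care is the identification of the codeword bit error rate with $D(H)$. One has to check that the MAP decision for each $c_i$ coincides with the MAP decision for $Z_i$ with $Y_i$ flipped appropriately, so that the per-bit error probabilities are equal. One also has to handle the edge case $m=n$ (where $H$ is empty, $S$ is trivial, and the bound reads $P_{\mathrm b}\geq p$, which is attained). Everything else follows immediately from Lemma~\ref{lmm:affine-entropy}.
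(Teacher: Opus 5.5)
Your proposal is correct and follows the paper's proof: both identify the posterior of the noise given $Y=y$ with $P$ conditioned on the coset $\{z: Hz=Hy\}$, conclude $P_{\mathrm b}=D(H)$, and apply Theorem~\ref{thm:finite-converse} with $k=n-m$. Your explicit per-bit identification $\Prob(c_i=b\mid Y=y)=\Prob(Z_i=b+y_i\mid Y=y)$, your full-row-rank choice of $H$, and your treatment of the $m=n$ case add a little care that the paper leaves implicit.
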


\begin{proof}
Let $H\in\FF_2^{(n-m)\times n}$ be a parity-check matrix of the code $\calC$ with $\calC = \ker(H)$.
For every observation $y\in\FF_2^n$, Bayes' rule and the uniform prior on $c$ give
\begin{equation}\label{eq:channel-noise-posterior}
  \Prob(Z=z\mid Y=y)
  \propto
  P(z)\Indc_{\{y+z\in\calC\}}
  =P(z)\Indc_{\{Hz=Hy\}}.
\end{equation}
Thus the conditional law of $Z$ given $Y=y$ is the Bernoulli product measure conditioned on the syndrome $HZ=Hy$. Since $c=y+Z$, bitwise MAP decoding of $c$ is equivalent to bitwise MAP reconstruction of $Z$ from $HZ$, and therefore $P_{\mathrm b}=D(H)$. Applying \prettyref{thm:finite-converse} with $k=n-m$ gives
\[
  1-R
  \geq h(p)\left(1-\frac{P_{\mathrm b}}{p}\right),
\]
which is \eqref{eq:channel-form}.
\end{proof}

\section*{Declaration of AI use}
The first version of the proof was discovered by GPT-5.6 Sol in an interactive process guided by the author, who subsequently simplified the program and developed the self-contained proof presented here. 
In hindsight, and thanks to a literature search aided by Codex, it turns out that several of the proof ingredients have appeared in, or can be deduced from, the prior literature (see \prettyref{sec:literature-connections} for discussion). 
The author wrote the paper and assumes ful responsibility for its technical content.

\section*{Acknowledgment}

The author is grateful to Sergio Verd\'u for bringing this problem and the paper \cite{Massey1978} to his attention in 2007.

%\bibliographystyle{alpha}
%\bibliography{ancheta_massey_references}

\end{document}